\newtheorem{theorem}{Theorem}
\pgfplotsset{compat=1.18}
\def\BibTeX{{\rm B\kern-.05em{\sc i\kern-.025em b}\kern-.08em    T\kern-.1667em\lower.7ex\hbox{E}\kern-.125emX}}
\begin{document}


\title{A Feedback Control Framework for Incentivised Suburban Parking Utilisation and Urban Core Traffic Relief}

\author{

\IEEEauthorblockN{Abdul Baseer Satti$^{*}$, James Saunderson$^{*}$, Wynita Griggs$^{*,\dagger}$}
\IEEEauthorblockA{\textit{$^{*}$Department of Electrical and Computer Systems Engineering}\\
\textit{$^{\dagger}$Department of Civil and Environmental Engineering}\\
\textit{Monash University}\\
Clayton, Australia\\
$\{$abdul.satti, james.saunderson, wynita.griggs$\}$@monash.edu}\\

\IEEEauthorblockN{S. M. Nawazish Ali, Nameer Al Khafaf, Saman Ahmadi, Mahdi Jalili}
\IEEEauthorblockA{\textit{School of Engineering}\\
\textit{RMIT University}\\
Melbourne, Australia\\
$\{$syed.ali, nameer.al.khafaf, saman.ahmadi, mahdi.jalili$\}$@rmit.edu.au}\\

\begin{tabular}{c@{\extracolsep{8em}}c} 
\textlarger[0.5]{Jakub Mare\v{c}ek} & \textlarger[0.5]{Robert Shorten} \\
\textit{Artificial Intelligence Center} & \textit{Dyson School of Design Engineering} \\ 
\textit{Czech Technical University in Prague} & \textit{Imperial College London} \\
Prague, Czech Republic & South Kensington, London, United Kingdom\\
jakub.marecek@fel.cvut.cz & r.shorten@imperial.ac.uk
\end{tabular}

\thanks{This work was supported by an Australian Government Research Training Program (RTP) Scholarship; and by the Australian Government Department of Industry, Science and Resources, and the Department of Climate Change, Energy, the Environment and Water, under the International Clean Innovation Researcher Networks (ICIRN) program (Project ID ICIRN000077).}

}

\maketitle

\begin{abstract}
Urban traffic congestion, exacerbated by inefficient parking management and cruising for parking, significantly hampers mobility and sustainability in smart cities. Drivers often face delays searching for parking spaces, influenced by factors such as accessibility, cost, distance, and available services such as charging facilities in the case of electric vehicles. These inefficiencies contribute to increased urban congestion, fuel consumption, and environmental impact. Addressing these challenges, this paper proposes a feedback control incentivisation-based system that aims to better distribute vehicles between city and suburban parking facilities offering park-and-charge/-ride services. Individual driver behaviours are captured via discrete choice models incorporating factors of importance to parking location choice among drivers, such as distance to work, public transport connectivity, charging infrastructure availability, and amount of incentive offered; and are regulated through principles of ergodic control theory. The proposed framework is applied to an electric vehicle park-and-charge/-ride problem, and demonstrates how predictable long-term behaviour of the system can be guaranteed.
\end{abstract}

\begin{IEEEkeywords}
Smart Parking, Ergodic Control, Feedback Control, Discrete Choice Models, Resource Utilisation
\end{IEEEkeywords}

\section{Introduction}

Rapid growth in many urban centres around the world, coupled with ineffective policies \cite{b-intro-5}, has led to significant increases in the use of privately-owned cars \cite{b-intro-6}, making parking management a critical challenge in modern cities. These challenges are further exasperated by the rise of electric vehicles, which require charging facilities in addition to space to park. Intelligent Transportation Systems (ITS) are able to offer promising solutions by leveraging increasingly sophisticated models, hardware and computing technologies, real-time data, and dynamic controls to optimise parking and traffic management \cite{b-intro-1,b-intro-2,b-intro-3}. However, aligning individual driver decisions with the broader objectives of local authorities introduces complexities that continue to demand advanced strategies and innovative approaches.



Iterated Function Systems (IFS) and ergodic control theory can provide powerful approaches to shaping collective agent behaviour by offering robust frameworks for modelling stochastic dynamics, capturing probabilistic behaviours of multi-agent systems; see, for example, \cite{b4,b-intro-7,b-intro-11,b-intro-10,b-intro-8,b-intro-9}. While IFS have demonstrated effectiveness in a range of applications, from image processing \cite{b-intro-12}, to power systems operations \cite{b-intro-11}, their potential use for regulating parking as a resource in urban environments remains unexplored. In this paper, we therefore utilise concepts from IFS and ergodic control theory to present a novel feedback system that seeks to regulate, through incentivisation, parking location choice among drivers, where drivers have the capacity to choose to park in a city's central business district, close to a place of work, or in one of multiple suburbs that offer park-and-charge/-ride facilities.


In our framework, the incentive amounts offered to drivers are the outputs of decentralised feedback controllers, each associated with a parking location, and each potentially managed by a different local authority; and we demonstrate that, under some straightforward conditions, our system can guarantee predictable system behaviour, on the average, in the long-term, even in the face of the stochastic nature of human decision-making. Our framework employs discrete choice modelling (specifically, multinomial logit models) to model driver behaviour in regard to parking location choice. In Section \ref{s2} of the paper, we elaborate further on the problem scenario. In Section \ref{s3}, we present the novel system framework, along with the human decision-making model utilised, and the required mathematical results. A demonstrative park-and-charge/-ride use case scenario is then presented in Section \ref{s4}. Finally, we provide avenues for future work in Section \ref{s5}.

\subsection{Notation}
Generally speaking, superscripts will denote drivers, and subscripts will denote parking locations. $T$ denotes the matrix transpose.

\section{Problem Description}\label{s2}

We begin by describing a situation often encountered on a weekday morning in urban areas. Consider the scenario depicted in Fig. \ref{fig:scenario}. Suppose that there are $N$ privately-owned vehicles inhabited by occupants needing to travel into a City's central business district (CBD) during peak hour to get to work. By default, these occupants nominally prefer to drive their cars all of the way into the City, and park there, because doing so is an attractive option in terms of travel comfort, convenience and habit. 

\begin{figure}[b]
    \centering
\includegraphics[width=\columnwidth]{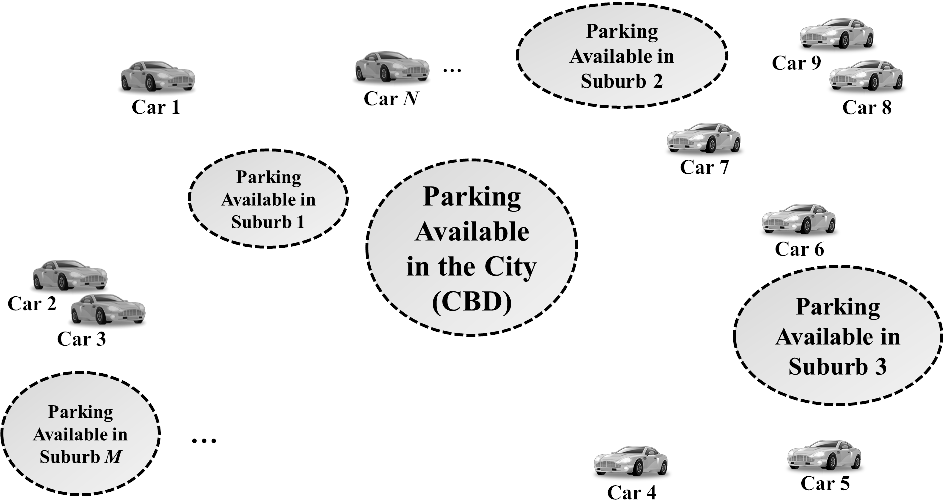}
    \caption{Problem scenario. (Sub-images of the car obtained from Openclipart \cite{b1}.)}
    \label{fig:scenario}
\end{figure}

Conversely, suppose local government authorities desire to remove vehicle congestion from the City by decreasing demand for City parking spaces; and thus would like to incentivise some of the privately-owned cars to instead park in one of the $M$ Suburbs, such that the occupants of the privately-owned vehicles then use alternative means of transport to complete their journeys into the City (e.g., public transport). One could also suppose that suburban government authorities have at least some level of autonomy over any potential incentivisation schemes implemented in their own geographical areas.

Our aim is to introduce a feedback control incentivisation-based system that achieves these goals; and moreover, we will demonstrate that, under some straightforward conditions, one can use ideas from ergodic control theory to guarantee predicability of the system, in the long-term, regardless of its initial conditions.


\section{Proposed Framework}\label{s3}


Our proposed framework takes the form of the feedback control system illustrated in Fig. \ref{fig:interconnection}. Recall that $N$ and $M$ are positive integers, and let $M<N$. We elaborate on each of the components of the block diagram, as follows:
\begin{itemize}
    \item let $r:=\left(r_1 \cdots r_M \right)^T$ denote the column vector of reference inputs, where each $r_j$, for $j=1,\dots,M$, is the number of cars that are desired by local authorities to park in location $j$;
    \item let $\mathcal{C}:=\textbf{diag}\left(\mathcal{C}_1,\dots,\mathcal{C}_M \right)$ denote a matrix comprised of linear shift-invariant, single-input, single-output controllers, $\mathcal{C}_j$, down its diagonal, and entries of 0 elsewhere, where each $\mathcal{C}_j$ takes as input an error, $e_j$, and produces an output, $\pi_j$, which represents an incentive amount to be offered to each driver to entice them to park in location $j$;
    \item suppose \begin{equation}
        \mathcal{F}:=\begin{pmatrix}
        \mathcal{F}_1 & 0 & \cdots & 0 & 0 \\
        0 & \mathcal{F}_2 & \cdots & 0 & 0 \\
        \vdots & \vdots & \cdots & \vdots & \vdots \\
        0 & 0 & \cdots & \mathcal{F}_M & 0
    \end{pmatrix}\label{e-filter}
    \end{equation} denotes a matrix comprised of linear shift-invariant, single-input, single-output filters (e.g., delay operators, moving average filters), $\mathcal{F}_j$, and entries of 0, as indicated by \eqref{e-filter};
    \item let $\hat{y}:=\left(\hat{y}_1 \cdots \hat{y}_M \right)^T$ denote the output of $\mathcal{F}$, $e:=\left(e_1 \cdots e_M \right)^T = r-\hat{y}$ and $\pi:=\left(\pi_1 \cdots \pi_M \right)^T$;
    \item for each $i=1,\dots,N$, let $\mathcal{P}^i$ denote the $i$th driver, who takes as input the column vector of all offered incentive amounts, $\pi$, and outputs a column vector, $y^i$, from the set
    \begin{equation}y^i\in\left\{\begin{pmatrix}
        1_1\\0\\\vdots\\0 \end{pmatrix}, \begin{pmatrix}
        0\\1_2\\\vdots\\0 \end{pmatrix}, \dots, \begin{pmatrix}
        0\\0\\\vdots\\1_{M+1} \end{pmatrix}\right\},\end{equation}
   where the subscript of the nonzero entry in $y^i$ corresponds to the parking location driver $i$ has decided upon at time step $k$, with parking location $M+1$ denoting the City;
   \item and, finally, note that $y=\sum^{N}_{i=1} y^i$.
\end{itemize}

\begin{figure}[t]
    \centering
\includegraphics[width=\columnwidth]{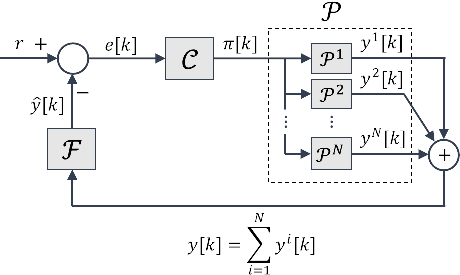}
    \caption{Block diagram of the feedback control incentivisation-based system.}
    \label{fig:interconnection}
\end{figure}

\subsection{Driver decision-making model}

A discrete choice model is used to describe the manner by which a driver, $\mathcal{P}^i$, produces an output, $y^i$, noting that, in our scenario, the set of parking location choices (i.e., the finite set consisting of the City, and each of the $M$ Suburbs) is mutually exclusive and collectively exhaustive. Specifically, let the probability, $p^i_{\text{Suburb}_j}(\pi_1,\dots,\pi_M)$, of a driver $i$ deciding to park in Suburb $j$ (and thus consume that resource), for each $j=1,\dots,M$, be given by
\begin{multline}
    p^i_{\text{Suburb}_j}(\pi_1,\dots,\pi_M) = \\
    \frac{e^{\text{Utility}^i_j(\pi_j)}}{e^{\text{Utility}^i_1(\pi_1)} + \dots + e^{\text{Utility}^i_M(\pi_M)} + e^{\text{Utility}^i_{\text{City}}}};
\end{multline}
and the probability of driver $i$ deciding to park in the City, $p^i_{\text{City}}(\pi_1,\dots,\pi_M)$, be given by
\begin{multline}
    p^i_{\text{City}}(\pi_1,\dots,\pi_M) = \\
    \frac{e^{\text{Utility}^i_{\text{City}}}}{e^{\text{Utility}^i_1(\pi_1)} + \dots + e^{\text{Utility}^i_M(\pi_M)} + e^{\text{Utility}^i_{\text{City}}}};
\end{multline}
where each $\pi_j[k]$ is the incentive amount being offered by a local authority as enticement to drivers to park in Suburb $j$, at discrete time step $k$. Note that Utility$^i_{j}(\pi_j)$, as well as Utility$^i_{\text{City}}$, are functions representing the perceived attractiveness of the different parking choice locations to driver $i$. Throughout this paper, we will assume utility functions of the form
\begin{equation}
    \text{Utility}^i_{j}(\pi_j) = \gamma^i_{j0} \times \pi_j + \sum_{k=1}^{Q\in\mathbb{N}}\left(\gamma^i_{jk} \times X^i_{jk}\right),
\end{equation}
where the $X^i_{jk}$ are attributes (e.g., total travel time to work, parking location fees, public transportation fees, cost of petrol, cost to charge electric vehicle, income level of the driver, frequency of public transport services, number of electric vehicle chargers available) associated with each of the alternative parking locations and/or individual drivers; and $\gamma^i_{j0}$, and $\gamma^i_{jk}$, for $k=1,\dots,Q$, are weights relating to how much significance driver $i$ places on the corresponding $k$th attribute. Similarly, let
\begin{equation}
    \text{Utility}^i_{\text{City}} = bias^i + \sum_{k=1}^{Q\in\mathbb{N}}\left(\gamma^i_{\text{City}k} \times X^i_{\text{City}k}\right),
\end{equation}
where $bias^i$ is a weight, or bias constant, signifying an inherit preference for driving one's privately-owned car all of the way into the City and parking there.

For each driver $i$, also note that
\begin{equation}
    p^i_{\text{City}}(\pi_1,\dots,\pi_M) + \sum^{M}_{j=1}p^i_{\text{Suburb}_j}(\pi_1,\dots,\pi_M) = 1.
\end{equation}

As an example, for illustrative purposes, the probabilities $p_{\text{Suburb}_1}(\pi_1,\pi_2)$, $p_{\text{Suburb}_2}(\pi_1,\pi_2)$ and $p_{\text{City}}(\pi_1,\pi_2)$, as well as the sum of all three of these probabilities, have been plotted in Fig. \ref{fig:probabilities}, for $\gamma_{10} = \gamma_{20} = 10$, $\sum\left(\gamma_{1\bullet} \times X_{1\bullet}\right) = -62.28$, $\sum\left(\gamma_{2\bullet} \times X_{2\bullet}\right) = -66$, $bias = 35$ and $\sum\left(\gamma_{\text{City}\bullet} \times X_{\text{City}\bullet}\right) = -53.12$. From Fig. \ref{fig:probabilities}, it can be seen (for instance) that, for a fixed $\pi_2$, if enough incentive $\pi_1$ is offered to a driver to park in Suburb 1, then they are more likely to do so; and similarly, for a fixed $\pi_1$, if enough incentive $\pi_2$ is offered to a driver to park in Suburb 2, then they are more likely to do so. Conversely, and as desired, offering enough of either incentive $\pi_1$ and/or $\pi_2$ increases the likelihood of a driver being enticed to not park in the City.

\begin{figure}[b]
    \centering
\includegraphics[width=\columnwidth]{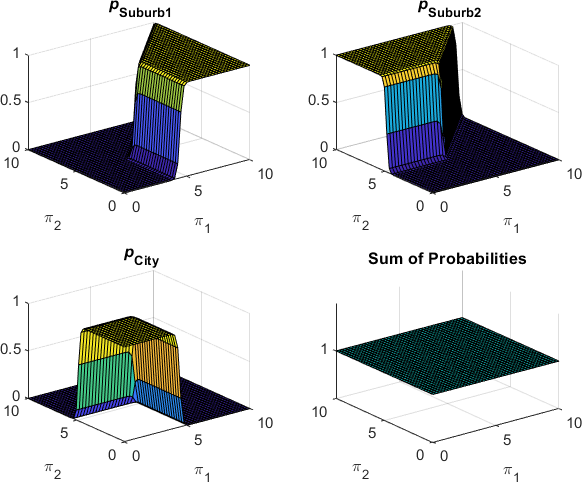}
    \caption{Probabilities versus incentives.}
    \label{fig:probabilities}
\end{figure}

It is also important to note that the amount of parking space in the Suburbs and the City is a finite resource. Therefore, as indicated in our proposed framework above, we employ controllers in a feedback-loop as a means to attempt to regulate the numbers of drivers choosing which parking locations to go to, in the face of their probabilistic decision-making, where the controller outputs are the incentive amounts. Due to the stochastic nature of human decision-making, we therefore use results from ergodic control theory and IFS to achieve this regulation, through guaranteeing the predictability of our system in the long-run. In particular, we invoke Theorem 12 of \cite{b4}, as seen in the next subsection.

\subsection{Mathematical Result}

Utilising Theorem 12 of \cite{b4}, we derive the following result.

\begin{theorem}\label{thm1}
Given $M$ stable, linear shift-invariant, single-input, single-output controllers, $\mathcal{C}_j$, $j=1,\dots,M$, where the dynamics of each controller are described by
\begin{align}
x_{\mathcal{C}_j}[k+1] & = A_{\mathcal{C}_j} x_{\mathcal{C}_j}[k] + B_{\mathcal{C}_j} e_j[k],\\
\pi_j[k] & = C_{\mathcal{C}_j} x_{\mathcal{C}_j}[k] + D_{\mathcal{C}_j} e_j[k],
\end{align}
with $x_{\mathcal{C}_j}$ denoting the internal state of the controller; and given $M$ stable, linear shift-invariant, single-input, single-output filters, $\mathcal{F}_j$, comprising $\mathcal{F}$ as described at the beginning of Section \ref{s3}, where the dynamics of $\mathcal{F}$ are expressed by
\begin{align}
x_{\mathcal{F}}[k+1] & = A_{\mathcal{F}} x_{\mathcal{F}}[k] + B_{\mathcal{F}} y[k],\\
\hat{y}[k] & = C_{\mathcal{F}} x_{\mathcal{F}}[k],
\end{align}
with $x_{\mathcal{F}}$ denoting the internal state of $\mathcal{F}$; then consider the feedback system depicted in Fig. \ref{fig:interconnection}. Suppose that each driver $i$, where $i=1,\dots,N$, has dynamics governed by
\begin{align}
x^i[k+1] & = b^i,\\
y^i[k] & = x^i[k],
\end{align}
where $x^i$ denotes the internal state of the $i$th driver, and $b^i$ is a random variable, selected at each time step from the set
\begin{multline}
    \left\{b^{i}_{\text{Suburb}_1} := \begin{pmatrix} 1_1\\0\\\vdots\\0 \end{pmatrix}, b^{i}_{\text{Suburb}_2} := \begin{pmatrix} 0\\1_2\\\vdots\\0 \end{pmatrix}, \dots, \right. \\ \left. b^{i}_{\text{City}} := \begin{pmatrix} 0\\0\\\vdots\\1_{M+1} \end{pmatrix}\right\}\label{set}
\end{multline}
according to $\mathbb{P}(b^i=b^{i}_{r})=p^{i}_r(\pi_1,\dots,\pi_M)$, for all $r\in\{\text{Suburb}_1,\dots,\text{Suburb}_M,\text{City}\}$. Then, the feedback loop converges in distribution to a unique invariant measure.
\end{theorem}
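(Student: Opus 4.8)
The plan is to recast the entire closed loop as an iterated function system (IFS) with place-dependent probabilities, and then verify that it meets the hypotheses of Theorem 12 of \cite{b4}. First I would collapse the deterministic part of the loop --- the bank of controllers together with the filter --- into a single affine recursion. Introducing the aggregate state $s[k] := \bigl(x_{\mathcal{C}}[k]^T, x_{\mathcal{F}}[k]^T, y[k]^T\bigr)^T$ and eliminating $e[k] = r - C_{\mathcal{F}} x_{\mathcal{F}}[k]$ and $\pi[k] = C_{\mathcal{C}} x_{\mathcal{C}}[k] + D_{\mathcal{C}} e[k]$ by substitution, the one-step map takes the affine form $s[k+1] = L\, s[k] + g\bigl(y[k+1]\bigr)$, in which only the additive term varies with the next aggregate choice $y[k+1] := \sum_{i=1}^N b^i$, while the linear part
\[ L = \begin{pmatrix} A_{\mathcal{C}} & -B_{\mathcal{C}} C_{\mathcal{F}} & 0 \\ 0 & A_{\mathcal{F}} & B_{\mathcal{F}} \\ 0 & 0 & 0 \end{pmatrix} \]
is \emph{shared} by every branch. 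Because $y$ is a sum of $N$ vectors drawn from the finite set \eqref{set}, it ranges over a finite subset of the integer lattice; each attainable value of $y$ therefore indexes one affine map $W_y(s) := L\,s + g(y)$, and the IFS consists of finitely many such maps.

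Next I would identify the place-dependent selection probabilities. The incentive vector is an affine readout $\pi(s)$ of the state, and the probability of selecting the branch $W_y$ is the softmax-weighted sum $P_y(s) = \sum_{\mathbf{b}:\,\sum_i b^i = y} \prod_{i=1}^N p^i_{r_i}\bigl(\pi(s)\bigr)$ over all joint driver choices $\mathbf{b}$ consistent with $y$. Since each $p^i_r$ is a multinomial logit, these probabilities are real-analytic, strictly positive, and globally Lipschitz in $\pi$, hence --- composed with the affine $\pi(\cdot)$ --- smooth and Lipschitz in $s$; this is precisely the regularity the place-dependent IFS hypotheses require. A secondary point I would check here is that reducing the per-driver choices $\mathbf{b}$ to the aggregate $y$ preserves the Markov structure and the correct branch probabilities, which it does because the drivers influence the loop only through the sum $y = \sum_i y^i$ feeding the filter.

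The contractivity hypothesis is then immediate. As $L$ is block upper-triangular, $\rho(L) = \max\bigl(\rho(A_{\mathcal{C}}), \rho(A_{\mathcal{F}}), 0\bigr) < 1$ by the assumed stability of the controllers and the filter. Choosing an adapted vector norm in which $\|L\| < 1$ makes every branch $W_y$ a contraction with a common Lipschitz factor, which supplies the (average) contractivity condition of Theorem 12 at once. To address the fact that the state space is a priori unbounded, I would exhibit a compact absorbing set: the finitely many values of $g(y)$ are bounded, so a Schur-stable $L$ drives all trajectories into a bounded, forward-invariant region in finite time; restricting to its closure gives a compact state space on which the $P_y$ are uniformly Lipschitz and uniformly bounded below by a positive constant. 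With contractivity of the maps together with this regularity and positivity of the probabilities, Theorem 12 of \cite{b4} then yields a unique invariant measure to which the chain converges in distribution from every initial condition.

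I expect the main obstacle to lie in matching the precise regularity conditions of Theorem 12 rather than in the contraction itself: the contraction follows cleanly from the shared Schur linear part, but one must confirm that the softmax place-dependent probabilities satisfy whichever continuity (e.g.\ Dini or Hölder) and strict-positivity requirements the theorem imposes, and that the construction of the compact absorbing set makes these hold globally on the reduced state space. Verifying that these hypotheses transfer correctly from the abstract IFS setting of \cite{b4} to the specific affine-plus-logit structure derived above is where the care is needed.
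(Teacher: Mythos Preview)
Your proposal is correct and follows the same overall strategy as the paper --- cast the closed loop as an IFS with place-dependent probabilities and invoke Theorem~12 of \cite{b4} --- but the two differ markedly in how much is actually spelled out. The paper's proof is almost entirely a reduction: it observes that stacking the $M$ stable SISO controllers into $\mathcal{C}=\textbf{diag}(\mathcal{C}_1,\dots,\mathcal{C}_M)$ yields a stable system because $\rho\bigl(\textbf{diag}(A_{\mathcal{C}_1},\dots,A_{\mathcal{C}_M})\bigr)<1$, makes the analogous remark for $\mathcal{F}$, and then simply states that ``the remainder of the proof follows in the manner of \cite[Theorem~12]{b4}'s proof.'' You embed that same block-diagonal observation implicitly (your $\rho(A_{\mathcal{C}})<1$ step is exactly it) but then go on to do what the paper delegates: you construct the aggregate affine recursion, exhibit the shared Schur linear part $L$, verify contractivity in an adapted norm, check smoothness and strict positivity of the logit-based place-dependent probabilities, and build the compact absorbing set. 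What this buys you is a self-contained argument that makes clear \emph{why} the hypotheses of Theorem~12 are met in this particular affine-plus-softmax structure; what the paper's approach buys is brevity, at the cost of leaving all of that verification to the reader of \cite{b4}. Neither proof contains an idea the other lacks.
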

\begin{proof}
    Theorem \ref{thm1} is a modification of \cite[Theorem 12]{b4}. First, note that, if each linear shift-invariant, single-input, single-output controller, $\mathcal{C}_j$, is stable, then $\mathcal{C}=\textbf{diag}\left(\mathcal{C}_1,\dots,\mathcal{C}_M \right)$ is stable. This can be seen by expressing $\mathcal{C}$ as
    \begin{align}
        \mathcal{C} &= \textbf{diag}\left(C_{\mathcal{C}_1}(zI-A_{\mathcal{C}_1})^{-1}B_{\mathcal{C}_1} + D_{\mathcal{C}_1}, \dots, \right.\notag\\
        &\hspace{12.2mm} \left. C_{\mathcal{C}_M}(zI-A_{\mathcal{C}_M})^{-1}B_{\mathcal{C}_M} + D_{\mathcal{C}_M} \right)\\
        &= \textbf{diag}\left(C_{\mathcal{C}_1},\dots,C_{\mathcal{C}_M} \right)\notag\\
        &\hspace{12.2mm} \times \left(\textbf{diag}\left(zI-A_{\mathcal{C}_1},\dots,zI-A_{\mathcal{C}_M} \right)\right)^{-1}\notag\\
        &\hspace{12.2mm} \times \textbf{diag}\left(B_{\mathcal{C}_1},\dots,B_{\mathcal{C}_M} \right)\notag\\
        &\hspace{12.2mm} + \textbf{diag}\left(D_{\mathcal{C}_1},\dots,D_{\mathcal{C}_M} \right)\\
        &= \textbf{diag}\left(C_{\mathcal{C}_1},\dots,C_{\mathcal{C}_M} \right)\notag\\
        &\hspace{12.2mm} \times \left(zI-\textbf{diag}\left(A_{\mathcal{C}_1},\dots,A_{\mathcal{C}_M} \right)\right)^{-1}\notag\\
        &\hspace{12.2mm} \times \textbf{diag}\left(B_{\mathcal{C}_1},\dots,B_{\mathcal{C}_M} \right)\notag\\
        &\hspace{12.2mm} + \textbf{diag}\left(D_{\mathcal{C}_1},\dots,D_{\mathcal{C}_M} \right),
    \end{align}
    where a matrix inversion formula \cite[Section 2.3]{b3} was repeatedly employed to derive the second equality; and subsequently observing that, since the spectral radius of each $A_{\mathcal{C}_j}$ is strictly less than one, then the spectral radius of $\textbf{diag}\left(A_{\mathcal{C}_1},\dots,A_{\mathcal{C}_M} \right)$ is strictly less than one. Similarly, it can be shown that, if each linear shift-invariant, single-input, single-output filter, $\mathcal{F}_j$, is stable, then $\mathcal{F}$ (as defined at the beginning of Section \ref{s3}) is stable. The remainder of the proof follows in the manner of \cite[Theorem 12]{b4}'s proof.
\end{proof}


\section{An Example Use Case Scenario: Park-and-Charge/-Ride}\label{s4}

The following use case scenario serves to demonstrate the efficacy of our system. Let us consider a small urban region with suitable park-and-charge/-ride locations situated in Suburbs 1 and 2 and the City. In terms of public transport, the region is serviced by buses.

Let $N=100$ be the total number of privately-owned vehicles inhabited by occupants wanting to travel into the City during the morning's peak hour to get to work. Furthermore, suppose that all of the privately-owned vehicles can be categorised into one of two classes: (Class 1) those that are electric vehicles; and (Class 2) those that are not, thus being vehicles with internal combustion engines. Suppose that all drivers with vehicles in Class 1 (i.e., Population 1) have the following utility functions associated with them
\begin{gather*}
    \text{Utility}^{i_{\text{Class}1}}_{1}(\pi_1) = \gamma^{i_{\text{Class}1}}_{10} \times \pi_1 + \sum_{k=1}^{6}\left(\gamma^{i_{\text{Class}1}}_{1k} \times X^{i_{\text{Class}1}}_{1k}\right),\\
    \text{Utility}^{i_{\text{Class}1}}_{2}(\pi_2) = \gamma^{i_{\text{Class}1}}_{20} \times \pi_2 + \sum_{k=1}^{6}\left(\gamma^{i_{\text{Class}1}}_{2k} \times X^{i_{\text{Class}1}}_{2k}\right),\\
    \text{Utility}^{i_{\text{Class}1}}_{\text{City}} = bias^{i_{\text{Class}1}} + \sum_{k=1}^{6}\left(\gamma^{i_{\text{Class}1}}_{\text{City}k} \times X^{i_{\text{Class}1}}_{\text{City}k}\right);
\end{gather*}
and all drivers with vehicles in Class 2 (i.e., Population 2) have the following utility functions associated with them
\begin{gather*}
    \text{Utility}^{i_{\text{Class}2}}_{1}(\pi_1) = \gamma^{i_{\text{Class}2}}_{10} \times \pi_1 + \sum_{k=1}^{6}\left(\gamma^{i_{\text{Class}2}}_{1k} \times X^{i_{\text{Class}2}}_{1k}\right),\\
    \text{Utility}^{i_{\text{Class}2}}_{2}(\pi_2) = \gamma^{i_{\text{Class}2}}_{20} \times \pi_2 + \sum_{k=1}^{6}\left(\gamma^{i_{\text{Class}2}}_{2k} \times X^{i_{\text{Class}2}}_{2k}\right),\\
    \text{Utility}^{i_{\text{Class}2}}_{\text{City}} = bias^{i_{\text{Class}2}} + \sum_{k=1}^{6}\left(\gamma^{i_{\text{Class}2}}_{\text{City}k} \times X^{i_{\text{Class}2}}_{\text{City}k}\right);
\end{gather*}
where $i_{\text{Class}1}=1,\dots,20$, $i_{\text{Class}2}=1,\dots,80$, $bias^{i_{\text{Class}1}} = bias^{i_{\text{Class}2}} = 35$ and $\gamma^{i_{\text{Class}1}}_{10} = \gamma^{i_{\text{Class}1}}_{20} = \gamma^{i_{\text{Class}2}}_{10} = \gamma^{i_{\text{Class}2}}_{20} = 10$.

Moreover, suppose $\sum_{k=1}^{6}\left(\gamma^{i_{\text{Class}1}}_{1k} \times X^{i_{\text{Class}1}}_{1k}\right)$ $=$ $-62.28$, $\sum_{k=1}^{6}\left(\gamma^{i_{\text{Class}1}}_{2k} \times X^{i_{\text{Class}1}}_{2k}\right)$ $=$ $-66$, $\sum_{k=1}^{6}\big(\gamma^{i_{\text{Class}1}}_{\text{City}k} \times X^{i_{\text{Class}1}}_{\text{City}k}\big)$ $=$ $-53.12$, $\sum_{k=1}^{6}\left(\gamma^{i_{\text{Class}2}}_{1k} \times X^{i_{\text{Class}2}}_{1k}\right)$ $=$ $-51.5$, $\sum_{k=1}^{6}\left(\gamma^{i_{\text{Class}2}}_{2k} \times X^{i_{\text{Class}2}}_{2k}\right)$ $=$ $-61$ and $\sum_{k=1}^{6}\big(\gamma^{i_{\text{Class}2}}_{\text{City}k} \times X^{i_{\text{Class}2}}_{\text{City}k}\big)$ $=$ $-35$, where $X^{\bullet}_{\bullet 1}$ denotes total travel time to work, $X^{\bullet}_{\bullet 2}$ denotes the parking fees payable at the location, $X^{\bullet}_{\bullet 3}$ denotes the additional fees payable to charge an electric vehicle at the parking location, $X^{\bullet}_{\bullet 4}$ denotes the ticket price of taking a bus from a suburban parking location to the City, $X^{\bullet}_{\bullet 5}$ denotes the frequency of bus services from a suburban parking location to the City, and $X^{\bullet}_{\bullet 6}$ denotes the number of electric vehicle chargers at the parking location.

Let $r_1=25$ and $r_2=35$ be the number of vehicles that local authorities would like to see accommodated in Suburbs 1 and 2, respectively, as part of their park-and-charge/-ride programs. Each local authority has autonomy over the incentive amounts that they offer to drivers, in that each local authority $j$ implements a controller $\mathcal{C}_j$ of the form
\begin{equation}
    \pi_j[k] = \beta_j  \pi_j[k-1] + \kappa_j(e_j[k] - \alpha_je_j[k-1]),
\end{equation}
where $\alpha_1 = -0.01$, $\alpha_2 = -0.01$, $\beta_1 = 0.9$, $\beta_2 = 0.99$, $\kappa_1 = 0.15$ and $\kappa_2 = 0.2$, for $j=1,2$.

For the experiment, 1,000 simulations were run in total, and each simulation ran for 1,000 time steps. For simplicity, the controllers updated at every time step; and the filters $\mathcal{F}_{1}$ and $\mathcal{F}_{2}$ each invoked a pure delay of one time step. At the beginning of each simulation run, the six initial values for $p^{i_{\text{Class}1}}_{\text{Suburb}_1}(\pi_1,\pi_2)$, $p^{i_{\text{Class}1}}_{\text{Suburb}_2}(\pi_1,\pi_2)$, $p^{i_{\text{Class}1}}_\text{City}(\pi_1,\pi_2)$, $p^{i_{\text{Class}2}}_{\text{Suburb}_1}(\pi_1,\pi_2)$, $p^{i_{\text{Class}2}}_{\text{Suburb}_2}(\pi_1,\pi_2)$ and $p^{i_{\text{Class}2}}_\text{City}(\pi_1,\pi_2)$ were randomly generated to demonstrate different initial conditions.


Figs. \ref{fig:results_1} and \ref{fig:results_2} illustrate the average numbers of drivers (from each of the populations) deciding upon Suburbs 1 and 2, respectively, as their parking location of choice, over time, from the 1,000 simulation runs. Figs. \ref{fig:results_3} and \ref{fig:results_4} depict the mean outputs from controllers $\mathcal{C}_{1}$ and $\mathcal{C}_{2}$, respectively, over time. Figs. \ref{fig:results_5} and \ref{fig:results_6} illustrate the mean inputs to controllers $\mathcal{C}_{1}$ and $\mathcal{C}_{2}$, respectively, over time. Convergence of the means in all cases, and thus predictability of the system, is evident. The controllers did a reasonable job at achieving small mean steady state errors (see Figs. \ref{fig:results_5} and \ref{fig:results_6}), noting that the use of lag controllers was given preference over the use of PI controllers given the latter's potential to introduce instabilities.

\begin{figure}[t]
    \centering
\includegraphics[width=\columnwidth]{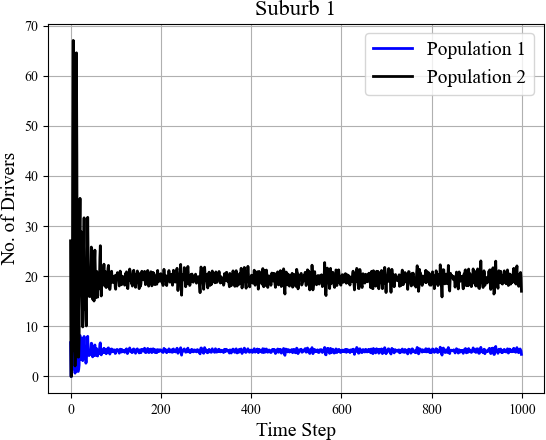}
    \caption{Evolution of the mean number of drivers, from each population, choosing Suburb 1. The population of drivers with vehicles in Class 1 are represented by the blue line, while the black line represents the population of divers with vehicles in Class 2.}
    \label{fig:results_1}
\end{figure}

\begin{figure}[t]
    \centering
\includegraphics[width=\columnwidth]{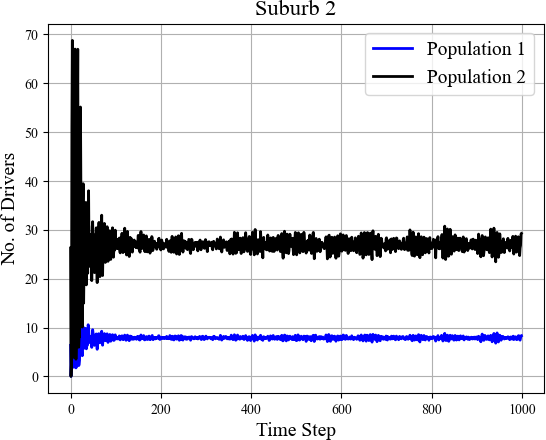}
    \caption{Evolution of the mean number of drivers, from each population, choosing Suburb 2. The population of drivers with vehicles in Class 1 are represented by the blue line, while the black line represents the population of divers with vehicles in Class 2.}
    \label{fig:results_2}
\end{figure}

\begin{figure}[t]
    \centering
\includegraphics[width=\columnwidth]{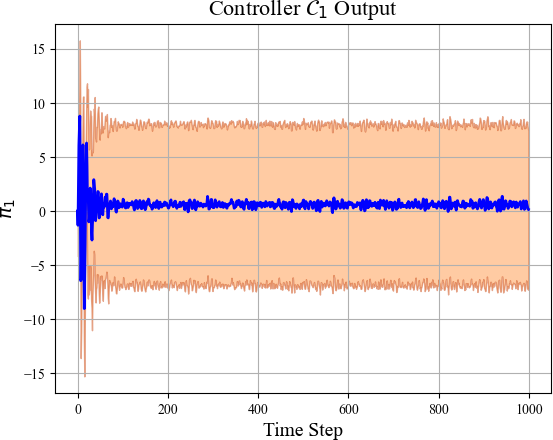}
    \caption{Output, $\pi_1$, of controller $\mathcal{C}_{1}$ over time. The mean output is represented by the blue line, while the orange shaded area indicates one standard deviation from the mean.}
    \label{fig:results_3}
\end{figure}

\begin{figure}[t]
    \centering
\includegraphics[width=\columnwidth]{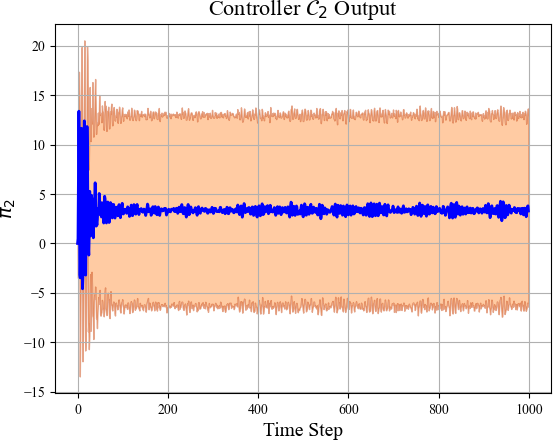}
    \caption{Output, $\pi_2$, of controller $\mathcal{C}_{2}$ over time. The mean output is represented by the blue line, while the orange shaded area indicates one standard deviation from the mean.}
    \label{fig:results_4}
\end{figure}

\begin{figure}[t]
    \centering
\includegraphics[width=\columnwidth]{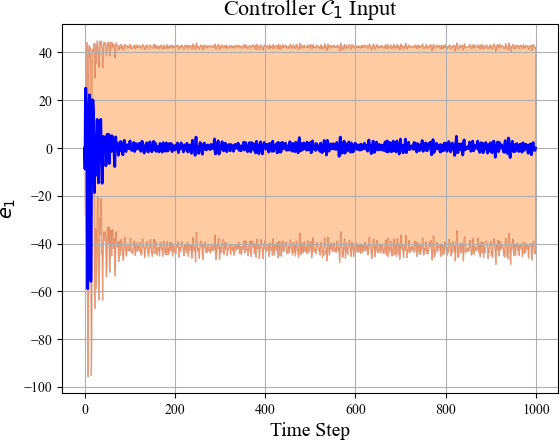}
    \caption{Input, $e_1$, to controller $\mathcal{C}_{1}$ over time. The mean input is represented by the blue line, while the orange shaded area indicates one standard deviation from the mean.}
    \label{fig:results_5}
\end{figure}

\begin{figure}[t]
    \centering
\includegraphics[width=\columnwidth]{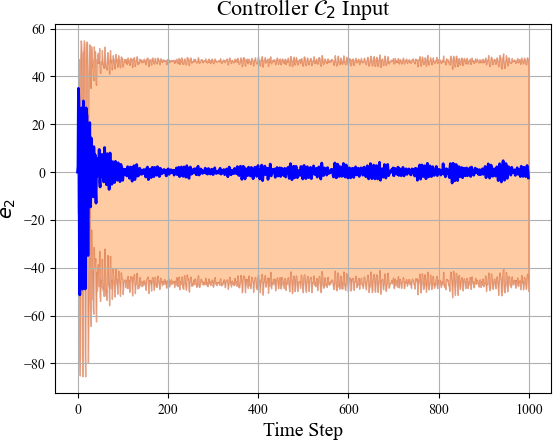}
    \caption{Input, $e_2$, to controller $\mathcal{C}_{2}$ over time. The mean input is represented by the blue line, while the orange shaded area indicates one standard deviation from the mean.}
    \label{fig:results_6}
\end{figure}

\section{Conclusions and Future Work}\label{s5}

In this paper, using tools from ergodic control theory and discrete choice modelling, a novel feedback system that regulates parking location choice among drivers, through incentivisation, was introduced. The proposed framework was applied to a park-and-charge/-ride scenario, for which it was shown that predictable long-term behaviour of the system was guaranteed. Following on from this study, there exist many potential and interesting avenues for future work to be explored. Some of these avenues consist of increasing the complexity and variety of the use case scenarios, including integration of the park-and-charge/-ride scenario with power grid features; and comparing and contrasting our proposed framework to existing incentivisation techniques.






\begin{thebibliography}{00}

\bibitem{b-intro-5} E. J. Taylor and R. van Bemmel-Misrachi, The elephant in the scheme: planning for and around car parking in Melbourne, 1929–2016, \emph{Land Use Policy}, vol. 60, pp. 287–297, 2017.

\bibitem{b-intro-6} M. Chwasta, \emph{Australia's biggest city has a car problem. What should Melbourne do to fix it?}, Article, ABC News, 15 April 2024. (Last accessed: 30 April 2025.) [Online]. Available: https://www.abc.net.au/news/2024-04-15/vehicle-car-ownership-registrations-traffic-congestion-melbourne/103685800

\bibitem{b-intro-1} I. Bogoslavskyi, L. Spinello, W. Burgard and C. Stachniss, Where to park? Minimizing the expected time to find a parking space, in \emph{Proceedings of the 2015 IEEE International Conference on Robotics and Automation}, Seattle, WA, USA, 2015, pp. 2147–2152.

\bibitem{b-intro-2} A. Schlote, C. King, E. Crisostomi and R. Shorten, Delay-tolerant stochastic algorithms for parking space assignment, \emph{IEEE Transactions on Intelligent Transportation Systems}, vol. 15, no. 5, pp. 1922-1935, 2014.

\bibitem{b-intro-3} H. M. Abdelghaffar, S. F. A. Batista, A. Rehman, J. Cao, M. Men\'{e}ndez and S. E. Jabari, Comparison of probabilistic cruising-for-parking time estimation models, \emph{Transportation Research Part A: Policy and Practice}, vol. 184, 2024.



\bibitem{b4} A. R. Fioravanti, J. Mare\v{c}ek, R. N. Shorten, M. Souza and F. R. Wirth, On the ergodic control of ensembles, \emph{Automatica}, vol. 108, 2019.

\bibitem{b-intro-7} R. Ghosh, J. Mare\v{c}ek, W. M. Griggs, M. Souza and R. N. Shorten, Predictability and fairness in social sensing, \emph{IEEE Internet of Things Journal}, vol. 9, no. 1, pp. 37-54, 2022

\bibitem{b-intro-11} J. Mare\v{c}ek, M. Roubalik, R. Ghosh, R. N. Shorten and F. R. Wirth, Predictability and fairness in load aggregation and operations of virtual power plants, \emph{Automatica}, vol. 147, 2023.

\bibitem{b-intro-10} V. Kungurtsev, J. Mare\v{c}ek, R. Ghosh  and R. Shorten, On the ergodic control of ensembles in the presence of non-linear filters, \emph{Automatica}, vol. 152, 2023.

\bibitem{b-intro-8} M. Woodburn, W. M. Griggs, J. Mare\v{c}ek and R. N. Shorten, Herd Routes: a feedback control-based preventative system for improving female pedestrian safety on city streets, \emph{International Journal of Control}, vol. 98, no. 5, pp. 1032-1045, 2025.

\bibitem{b-intro-9} W. M. Griggs, R. Ghosh, J. Mare\v{c}ek and R. N. Shorten, Unique ergodicity in feedback interconnections of ensembles of agents, \emph{International Journal of Control}, to appear. Available: https://doi.org/10.1080/00207179.2025.2469281

\bibitem{b-intro-12} S. Abdulla and K. Mahipal Reddy, Optimizing the neural network and iterated function system parameters for fractal approximation using a modified evolutionary algorithm, \emph{Scientific Reports}, vol. 15, 2025. 

\bibitem{b1} OpenClipart.org. In: https://openclipart.org/detail/202922 (Accessed: 21 April 2025).

\bibitem{b3} K. Zhou with J. C. Doyle and K. Glover, \emph{Robust and Optimal Control}, Prentice Hall, Upper Saddle River, NJ; 1996.


\end{thebibliography}
\end{document}